\DeclarePairedDelimiter{\ceil}{\lceil}{\rceil}
\def\mindex#1{\index{#1}}
\def\sq{\hbox{\rlap{$\sqcap$}$\sqcup$}}
\def\qed{\ifmmode\sq\else{\unskip\nobreak\hfil
\penalty50\hskip1em\null\nobreak\hfil\sq
\parfillskip=0pt\finalhyphendemerits=0\endgraf}\fi\medskip}
\long\def\defbox#1{\framebox[.9\hsize][c]{\parbox{.85\hsize}{%
\parindent=0pt
\baselineskip=12pt plus .1pt      % STYLE
\parskip=6pt plus 1.5pt minus 1pt % CHANGES
 #1}}}
\long\def\beginbox#1\endbox{\subsection*{}%
\hbox{\hspace{.05\hsize}\defbox{\medskip#1\bigskip}}%
\subsection*{}}
\def\endbox{}
\newsavebox{\junk}
\savebox{\junk}[1.6mm]{\hbox{$|\!|\!|$}}
\def\det{{\mathop{\rm det}}}
\def\bfmath#1{{\mathchoice{\mbox{\boldmath$#1$}}%
{\mbox{\boldmath$#1$}}%
{\mbox{\boldmath$\scriptstyle#1$}}%
{\mbox{\boldmath$\scriptscriptstyle#1$}}}}
\def\bfmY{\bfmath{Y}}
\def\bfmhhaY{\bfmath{\hhaY}} %\widehat{\widehat{Y}}}}
\def\bfmhhaY{\hbox to 0pt{$\widehat{\bfmY}$\hss}\widehat{\phantom{\raise 1.25pt\hbox{$\bfmY$}}}}
\def\til={{\widetilde =}}
 \def\FRAC#1#2#3{\genfrac{}{}{}{#1}{#2}{#3}}
\def\ddtp{{\mathchoice{\FRAC{1}{d^{\hbox to 2pt{\rm\tiny +\hss}}}{dt}}%
{\FRAC{1}{d^{\hbox to 2pt{\rm\tiny +\hss}}}{dt}}%
{\FRAC{3}{d^{\hbox to 2pt{\rm\tiny +\hss}}}{dt}}%
{\FRAC{3}{d^{\hbox to 2pt{\rm\tiny +\hss}}}{dt}}}}
\def\average#1,#2,{{1\over #2} \sum_{#1}^{#2}}
\def\eye(#1){{\bf(#1)}\quad}
\newtheorem{lemma}{{\bf Lemma}}
\def\eq#1/{(\ref{e:#1})}
\newcommand{\beqn}[1]{\notes{#1}%
\begin{eqnarray} \elabel{#1}}
\newcommand{\eeqn}{\end{eqnarray} }
\newcommand{\beq}[1]{\notes{#1}%
\begin{equation}\elabel{#1}}
\newcommand{\eeq}{\end{equation}}
\def\bdes{\begin{description}}
\def\edes{\end{description}}
\newcounter{rmnum}
\newcounter{anum}
\def\ass(#1:#2){(#1\ref{#1:#2})}
\def\ritem#1{
\item[{\sf \ass(\current_model:#1)}]
}
\newenvironment{recall-ass}[1]{%
\begin{description}
\def\current_model{#1}}{
\end{description}
}
\long\def\comment#1{}
\newcommand{\nv}{{\bm n}}
\newcommand{\qv}{{\bm q}}
\newcommand{\wv}{{\bm w}}
\newcommand{\xv}{{\bm x}}
\newcommand{\yv}{{\bm y}}
\newcommand{\zv}{{\bm z}}
\newcommand{\Am}{{\bm A}}
\newcommand{\Bm}{{\bm B}}
\newcommand{\Fm}{{\bm F}}
\newcommand{\Gm}{{\bm G}}
\newcommand{\Hm}{{\bm H}}
\newcommand{\Id}{{\bm I}}
\newcommand{\Km}{{\bm K}}
\newcommand{\Om}{{\bm O}}
\newcommand{\Bc}{{\cal B}}
\newcommand{\Cc}{{\cal C}}
\newcommand{\Jc}{{\cal J}}
\newcommand{\Kc}{{\cal K}}
\newcommand{\Mc}{{\cal M}}
\newcommand{\Nc}{{\cal N}}
\newcommand{\Tc}{{\cal T}}
\newcommand{\Sigmam}{\hbox{\boldmath$\Sigma$}}
\renewcommand{\det}{{\hbox{det}}}
\begin{document}

% paper title
\title{Distributed Information Bottleneck for a Primitive Gaussian Diamond MIMO Channel}

\author{
	\IEEEauthorblockN{Yi Song\IEEEauthorrefmark{1}, 
           Hao Xu\IEEEauthorrefmark{2},
		Kai-Kit Wong\IEEEauthorrefmark{2},
		Giuseppe Caire\IEEEauthorrefmark{1},
		and
		Shlomo Shamai (Shitz)\IEEEauthorrefmark{3}
	}
\IEEEauthorblockA{\IEEEauthorrefmark{1}Faculty of Electrical Engineering and Computer Science, Technical University of Berlin, 10587 Berlin, Germany}
\IEEEauthorblockA{\IEEEauthorrefmark{2}Department of Electronic and Electrical Engineering, University College London, London WC1E7JE, U.K.}
\IEEEauthorblockA{\IEEEauthorrefmark{3}Department of Electrical and Computer Engineering, Technion-Israel Institute of Technology, Haifa 3200003, Israel}
\IEEEauthorblockA{E-mail: yi.song@tu-berlin.de, hao.xu@ucl.ac.uk; kai-kit.wong@ucl.ac.uk; caire@tu-berlin.de; sshlomo@ee.technion.ac.il}
\thanks{The corresponding author is Hao Xu. }
}

% make the title area
\maketitle

\begin{abstract}
This paper considers the distributed information bottleneck (D-IB) problem for a primitive Gaussian diamond channel with two relays and MIMO Rayleigh fading.
The channel state  is an independent and identically distributed (i.i.d.) process known at the relays but unknown to the destination. 
The relays are oblivious, i.e., they are unaware of the codebook and treat the transmitted signal as a random process with known statistics. The bottleneck constraints prevent the relays to communicate the channel state information (CSI) perfectly to the destination. 
To evaluate the bottleneck rate, we provide an upper bound by assuming that the destination node knows the CSI and the relays can cooperate with each other, and also two achievable schemes with simple symbol-by-symbol relay processing and compression. Numerical results show that the lower bounds obtained by the proposed achievable schemes can come close to the upper bound on a wide range of relevant system parameters.
\end{abstract}

% \begin{IEEEkeywords}
% 	Distributed information bottleneck, primitive Gaussian diamond channel, oblivious relay, Rayleigh fading, MIMO.
% \end{IEEEkeywords}

\IEEEpeerreviewmaketitle

%%%%%%%%%%%%%%%%%%%%%%%%%%%%%%%%%%%%%%%%%%%%%%%%
\section{Introduction} \label{introduction}

In modern wireless communication systems, the functionalities of a base station have been distributed 
according to a functional split between {\em Radio Units} (RUs), containing the transmission hardware  
(antennas, amplifiers, up/down  frequency and A/D conversion) and the so-called {\em Decentralized Units} (DUs), 
implementing the physical layer and MAC layer (channel coding/decoding, modulation/rate selection, power allocation, etc.). 
RUs and DUs are connected by a {\em fronthaul} network of large but finite capacity \cite{6897914, 7018201, 8387197, singh2020evolution}. 
In this context, the RUs operate as relays, defining a  multiaccess-relay network (uplink) or broadcast-relay network (downlink). A simplified model consists of a single DU, connected to the RUs by individual non-interfering error-free links of given capacity. Such model is referred to as ``primitive'' relay network \cite{kim2007coding, 6825845,8060585} 
% {\RED [ADD REF ON PRIMITIVE RELAY CHANNEL .. ORIGINAL PAPER I THINK BY WEI YU OR YOUNG-HAN KIM .. I DON'T RECALL]} 
and when
only one user is considered, as a ``diamond'' relay network. In addition, under the RU-DU functional split, 
the RUs (i.e., the relays) are {\em oblivious}, i.e., they are unaware of the codebook used to transmit information, and can only treat the transmitted signal as a random process of given statistics (for an information theoretic definition of 
oblivious relaying please see \cite{4544988,sanderovich2009distributed, aguerri2019TIT, katz2019gaussian, katz2021filtered, caire2018information, IBxu}).
% {\RED [REF TO SME OF SHLOMO'S PAPERS ON OBLIVIOUS RELAYING INCLUDING
% ]}). 

In the case of a single user and single relay, the capacity of such channel coincides with the solution of the 
so-called {\em information bottleneck} (IB) problem introduced by Tishby in \cite{tishby2000information}, 
where we wish to maximize $I(X; Z)$ subject to the bottleneck constraint $I(Y;Z) \leq C$ with $X$, $Y$, $Z$, $C$ respectively being the channel input, observation at the relay, representation variable communicated by the relay to the destination, and the capacity of the relay-to-destination  link. In the case of multiple users and relays, the problem has been generalized in many ways
(e.g., see  \cite{winkelbauer2014rate, winkelbauer2014ratevec, 4544988, sanderovich2009distributed, aguerri2019TIT, katz2019gaussian, katz2021filtered, courtade2013multiterminal, estella2018distributed, aguerri2019distributed, caire2018information, info12040155, IBxu}).
In particular, a general expression for the capacity region of the multi-access multi-relay case 
was found in  \cite{aguerri2019TIT}, under the additional condition that the signals $Y_k$ received at the $k$-th relays are mutually conditionally independent given by the transmitted signals. 

In wireless communication, the knowledge of the channel state (i.e., the matrix of channel coefficients between transmit and receiving antennas) is crucial to enable coherent detection. Accurate channel state information (CSI) can be obtained due to the fact that the channel coefficients (that form a random process that varies in time and frequency) remain practically constant over time-frequency blocks spanning a certain number $N_c$ of time-frequency channel uses.\footnote{The channel coherence block length $N_c$ is approximately given by $\alpha \lceil T_c W_c \rceil$ where $W_c$  (in Hz) is the channel coherence bandwidth, and $T_c$ is the channel coherence time. In turns, $W_c$ depends on the inverse of 
channel delay spread, and $T_c$ depends on the inverse of the channel Doppler spread, and $\alpha$ is some system constant  $\leq 1$. In typical wireless/mobile communications operating outdoor, in the carrier frequency range between 2 and 6 GHz, and with user mobility up to a few tens of km/h, $N_c$ may vary from a few hundred to a few thousands of symbols. For practical reasons, though, actual systems perform channel estimation on much shorter blocks (the so-called resource blocks) of 
$12 \times 14 = 168$ symbols, specified in standards such as 4G-LTE and 5GNR \cite{dahlman20134g, hui2018channel, krasniqi2018performance}. 
% {\RED [ADD SOME REFERENCE]}. 
This can be regarded as a sort of one-size fits all worst case design, also due to the granularity of the multiuser scheduling, that imposes the allocation of rather short data blocks.}
In this paper we are concerned with the uplink. In this case, the user sends some pilot symbols in each coherence block to allow the receiver to estimate the CSI. For a user with $M$ antennas, $M$ mutually orthogonal pilot sequences must be transmitted simultaneously from the antennas, requiring a minimum pilot length (in time-frequency channel uses) of $M$. In large MIMO systems, $M$ may be comparable with $N_c$. Thus, communicating the pilot field (from the relays to the receiver) over the capacity constrained fronthaul links imposes a non-trivial cost in terms of rate. 

The question that we pose in this paper is whether some oblivious (local) processing at the relays can be used in order to alleviate the burden of communicating the quantized pilot field over the fronthaul. 
In order to make the problem more tractable, we consider that the CSI is given for free (genie-aided) at the relays, but not at the
destination. Hence, the relays have the option of compressing the CSI and send it through the fronthaul links together with the received signal, or use the local CSI to operate some processing to the received signal, such that this can be further decoded at the destination without the explicit need of CSI. 
In particular, in this paper we consider an upper bound obtained by letting the CSI be known also at the destination (the so-called ``informed receiver'' upper bound), and compare it with some achievability strategies based on simple oblivious local processing. 
Interestingly, we find that under certain conditions the achievable lower bounds come quite close to the (unachievable) upper bound. This suggests that some ``intelligent'' oblivious processing at the relays may be useful in the RU-DU distributed base station paradigm, rather than insisting on ``dumb antennas'' \cite{viswanath2002opportunistic, 984690}.
% {\RED [REF SOME PAPER THAT TALKS ABOUT DUMB ANTENNAS]}. 

%%%%%%%%%%%%%%%%%%%%%%%%%%%%%%%%%%%%%%%%%%%%%%%%%%
\section{Problem Formulation}
\label{problem_formu}

\begin{figure}
	\centering
	\includegraphics[scale=0.45]{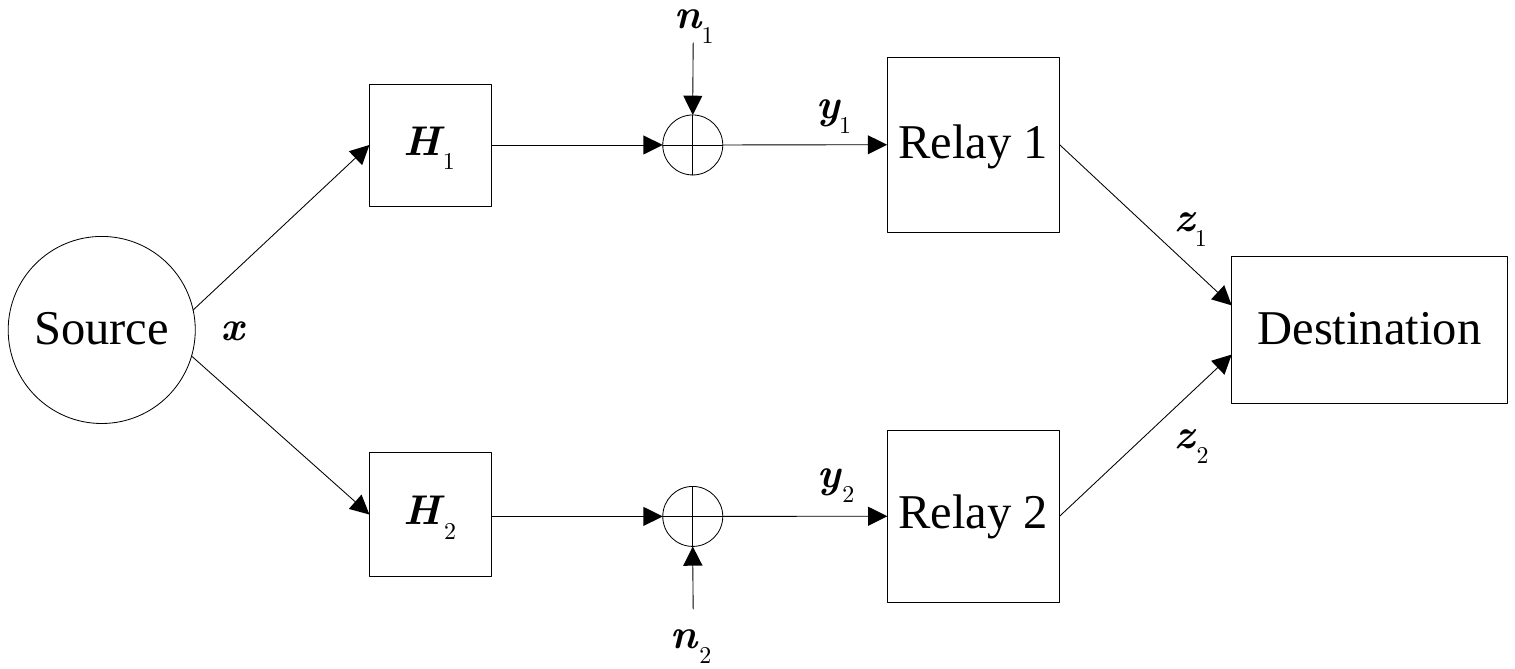}
	\vspace{-1em}
	\caption{A primitive Gaussian diamond MIMO channel with two relays.}
	\label{Block_diagram}
\end{figure}

As shown in Fig.~\ref{Block_diagram}, this paper considers a primitive Gaussian diamond channel with two relays and studies the distributed information bottleneck (D-IB) problem. 
The source node transmits signal $\xv \in \mathbb{C}^{M \times 1}$ to the relays over Gaussian MIMO channel with i.i.d. Rayleigh fading and each relay is connected to the destination via an error-free link with capacity  $C_k, ~\forall~ k \in {\cal K} \triangleq \{ 1, 2 \}$. 
The observation of relay $k$ is
\begin{equation}\label{obser}
\yv_k = \Hm_k \xv + \nv_k,
\end{equation}
where $\xv$ and $\nv_k \in \mathbb{C}^{N_k \times 1}$ are, respectively, zero-mean circularly symmetric complex Gaussian input and noise at relay $k$ with covariance matrix $\Id_M$ and $\sigma^2 \Id_{N_k}$, i.e., $\xv \sim \Cc\Nc(\mathbf{0}, \Id_M)$ and $\nv_k \sim \Cc\Nc(\mathbf{0}, \sigma^2 \Id_{N_k})$. $\Hm_k \in \mathbb{C}^{N_k \times M}$, which denotes channel fading from the source to relay $k$, is a random matrix independent of both $\xv$ and $\nv_k$, and the elements of $\Hm_k$ are i.i.d. 
$\sim \Cc\Nc(0, 1)$ (i.e., zero-mean unit-variance complex circularly symmetric Gaussian).

The relays are constrained to operate without knowledge of the codebooks, i.e., they perform oblivious processing and forward representations of their observations $\zv_k$ to the destination.
According to \cite[Theorem~$1$]{4544988}, with the bottleneck constraints satisfied, the achievable communication rate at which the source node could encode its messages is upper bounded by the mutual information between $\xv$ and $\zv_{\cal K} = \{\zv_k\}_{k \in {\cal K}}$.
Hence, we consider the following D-IB problem
\begin{subequations}\label{IB_problem}
	\begin{align}
	\mathop {\max }\limits_{\{p(\zv_k| \yv_k, \Hm_k)\}} & I(\xv; \zv_{\cal K}) \label{IB_problem_a}\\
	\text{s.t.} \quad\;\;\; &  I(\yv_{\cal T}, \Hm_{\cal T}; \zv_{\cal T}| \zv_{\overline{{\cal T}}}) \leq \sum_{k \in {\cal T}} C_k, ~\forall~ {\cal T} \subseteq {\cal K}, \label{IB_problem_b}
	\end{align}
\end{subequations}
where $C_k$ is the bottleneck constraint of relay $k$ and $\overline{{\cal T}}$ is the complementary set of ${\cal T}$, i.e., $\overline{{\cal T}} = {\cal K} \setminus {\cal T}$.
We call $I(\xv; \zv_{\cal K})$ the bottleneck rate and $I(\yv_{\cal T}, \Hm_{\cal T}; \zv_{\cal T}| \zv_{\overline{{\cal T}}})$ the compression rate.
Since the channel coefficient $\Hm_k$ varies in each realization and is only known at the relay $k$, $\Hm_{\cal T}$ is included in the compression rate formulation.
In (\ref{IB_problem}), we aim to find conditional distributions $p(\zv_k| \yv_k, \Hm_k), \forall k \in {\cal K}$ such that collectively, the compressed signals at the destination preserve as much the original information from the source as possible.

\section{Informed Receiver Upper Bound}
\label{informed_ub}
Since it is hard to derive a closed-form solution to the problem \eqref{IB_problem}, we derive an upper bound in this section.
Similar to the one-relay IB problems studied in \cite{caire2018information, info12040155, IBxu}, an obvious upper bound to problem (\ref{IB_problem}) can be obtained by assuming that the destination node knows all the channel coefficients $\Hm_{\cal K} = \{\Hm_k\}_{k \in {\cal K}}$.
We call this bound the informed receiver upper bound.
The D-IB problem then becomes
\begin{subequations}\label{IB_problem_ub}
	\begin{align}
	\mathop {\max }\limits_{\{p(\zv_k| \yv_k, \Hm_k)\}} & I(\xv; \zv_{\cal K}| \Hm_{\cal K}) \label{IB_problem_ub_a}\\
	\text{s.t.} \quad\;\;\; &  I(\yv_{\cal T}; \zv_{\cal T}| \zv_{\overline{{\cal T}}}, \Hm_{\cal K}) \leq \sum_{k \in {\cal T}} C_k, ~\forall~ {\cal T} \subseteq {\cal K}. \label{IB_problem_ub_b}
	\end{align}
\end{subequations}

Unlike the MIMO channel with one relay in \cite{IBxu}, it is still difficult to solve \eqref{IB_problem_ub}. Hence, besides the assumption that the destination node knows $\Hm_{\cal K}$, we further assume that the relays can cooperate such that each relay also knows the observations $\yv_k$ and $\Hm_k$ of the other relay.
Actually, the network in this case can be seen as a system with a source node with $M$ antennas, a relay with $\sum_{k \in \mathcal{K}} N_k$ antennas, a destination node, and bottleneck constraint $\sum_{k \in \mathcal{K}} C_k$, so the problem (\ref{IB_problem_ub}) becomes
\begin{subequations}\label{IB_problem_ub2}
	\begin{align}
	\mathop {\max }\limits_{\{p(\zv_k| \yv_k, \Hm_k)\}} \quad & I(\xv; \zv_{\cal K}| \Hm_{\cal K}) \label{IB_problem_ub2_a}\\
	\text{s.t.} \quad\quad\;\;\; &  I(\yv_{\cal K}; \zv_{\cal K}| \Hm_{\cal K}) \leq \sum_{k \in \mathcal{K}} C_k. \label{IB_problem_ub2_b}
	\end{align}
\end{subequations}
Denote matrix $\Hm = [\Hm_1; \Hm_2] \in \mathbb{C}^{(\sum_{k \in \mathcal{K}} N_k) \times M}$.
% and $\yv = [\yv_1,\yv_2]^T \in \Cc^{(N_1 + N_2) \times 1}$, then the problem becomes 
% \begin{subequations}\label{IB_problem_ub22}
% 	\begin{align}
% 	\mathop {\max }\limits_{p(\zv| \yv, \Hm)} \quad & I(\xv; \zv| \Hm) \label{IB_problem_ub22_a}\\
% 	\text{s.t.} \quad\quad\;\;\; &  I(\yv; \zv| \Hm) \leq \sum_{k \in \mathcal{K}} C_k. \label{IB_problem_ub22_b}
% 	\end{align}
% \end{subequations}
%
Obviously, the matrix $\Hm \Hm^{H}$ has $T = \min(
\sum_{k \in \Kc} N_k, M)$ positive eigenvalue $\lambda$.
It is known from \cite[(A17)]{IBxu} that the probability density function (pdf) of unordered eigenvalue $\lambda$ 
 of $\Hm \Hm^{H}$ is
\begin{equation}\label{pdf_lambda}
f_{\lambda} (\lambda) = \frac{1}{T} \sum_{i=0}^{T-1} \frac{i!}{(i + S - T)!} [L_{i}^{S-T} (\lambda)]^2 \lambda^{S-T} e^{(-\lambda)},
\end{equation}
where $S = \max{(\sum_{k \in \Kc} N_k, M)}$ and the Laguerre polynomials are 
\begin{align}
L_{i}^{S-T} (\lambda) = \frac{e^{\lambda}}{i! \lambda^{S - T}}  \frac{d^i}{d \lambda^i} \left(e^{-\lambda} \lambda^{S - T + i} \right).
\end{align}
Then, according to \cite[Theorem~1]{IBxu}, the solution of problem (\ref{IB_problem_ub2}), which forms an upper bound to the bottleneck rate $I(\xv; \zv_{\cal K})$ in (\ref{IB_problem_a}), is given by
\begin{equation}\label{R_up_KM}
R^{\text {ub}} = T \int_{\nu \sigma^2}^{\infty} \left[ \log \left(1 + \frac{\lambda}{\sigma^2} \right) - \log (1 + \nu)\right] f_\lambda (\lambda) d \lambda,
\end{equation}
where $\nu$ is chosen such that the following bottleneck constraint is met
\begin{equation}\label{bottle_constr_KM}
\int_{\nu \sigma^2}^{\infty} \left( \log \frac{\lambda}{\nu \sigma^2} \right) f_\lambda (\lambda) d \lambda = \frac{C_1 + C_2}{T}.
\end{equation}

% \section{Scalar Case}

\section{Achievable Schemes}
\label{achiev_schems}

In this section, we provide two achievable schemes where each scheme satisfies the bottleneck constraint and gives a lower bound to the bottleneck rate. 
Before that, we first give a result from \cite[Theorem~$5$]{4544988}, which is important for deriving the achievable schemes. 

Note that in \cite[Theorem~$5$]{4544988}, where the source and each relay only have a single antenna with fixed constant channel information $h_k, \forall k \in {\cal K}$ perfectly known at the destination node, the optimal value of problem (\ref{IB_problem_ub}) is 
\begin{align}\label{R_fixed_rho}
& R (\rho_{\cal K}, C_{\cal K}) = \nonumber\\
&\mathop {\max }\limits_{\{ r_k \}}\! \left\{\! \mathop {\min }\limits_{ {\cal T} \subseteq {\cal K}} \left\{\! \log\! \left[ 1 \!\!+\!\!\! \sum_{k \in {\cal T}^C} \!\rho_k \!\left( 1 \!\!-\!\! 2^{-r_k} \right) \right] \!\!+\!\! \sum_{k \in {\cal T}} (C_k \!-\! r_k) \!\right\} \!\right\}\!,
\end{align}
where $C_{\cal K} = \{C_k\}_{k \in {\cal K}}$, $\rho_{\cal K} = \{\rho_k\}_{k \in {\cal K}}$, $\rho_k = |h_k|^2/\sigma^2$ is the channel signal-to-noise ratio (SNR), and $r_k \geq 0$ is an intermediate variable.
%
% Notice that if we formulate problem (\ref{IB_problem}) based on \cite[Theorem~$1$]{courtade2013multiterminal}, (\ref{R_fixed_rho}) can also be obtained by using \cite[Theorem~$3$]{courtade2013multiterminal} and \cite[Theorem~$2$]{estella2018distributed}.
%
The optimal bottleneck rate $R (\rho_{\cal K}, C_{\cal K})$ in \eqref{R_fixed_rho} can be  obtained by introducing an auxiliary variable $\beta$ to solve the following equivalent problem
\begin{subequations}\label{eq_problem}
	\begin{align}
	& \mathop {\max }\limits_{r_1, r_2, \beta} \; \beta \label{eq_problem_a}\\
	& \text{s.t.} \; \log\! \left[\! 1 \!+\! \sum_{k \in {\cal T}^C} \rho_k \left( 1 \!-\! 2^{-r_k} \right) \!\right] \!+\! \sum_{k \in {\cal T}} (C_k \!-\! r_k) \!\geq\! \beta, \forall~ {\cal T} \!\subseteq\! {\cal K}, \label{eq_problem_b}\\
	& \quad\quad 0 \leq r_k \leq C_k, ~\forall~ k \in {\cal K}. \label{eq_problem_c}
	\end{align}
\end{subequations}
It can be readily found that problem \eqref{eq_problem} is convex and can thus be optimally solved using tools like CVX. 
In the following two subsections, we give the achievable schemes.

\subsection{Quantized channel inversion (QCI) scheme when $M \leq \min_{k \in \mathcal{K}} N_k$}
\label{QCI_scheme}

In our first scheme, each relay first gets an estimate of the channel input using channel inversion and then transmits the quantized noise levels as well as the compressed noisy signal to the destination node.

In particular, we apply the pseudo inverse matrix of $\Hm_k$, i.e., $(\Hm_k^H \Hm_k)^{-1} \Hm_k^H$, to $\yv_k$ and obtain the zero-forcing estimate of $\xv$ for relay $k$ as follows:
\begin{align}\label{eq:pinverse_x}
    \tilde{\xv}_k &=  (\Hm_k^H \Hm_k)^{-1} \Hm_k^H \yv_k \nonumber \\
    &= \xv + (\Hm_k^H \Hm_k)^{-1} \Hm_k^H \nv_k = \xv + \tilde{\nv}_k.
\end{align}
For a given channel matrix $\Hm_k$, $\tilde{\nv}_k \sim \mathcal{CN}(\mathbf{0}, \Am_k)$, where $\Am_k = \sigma^2 (\Hm_k^H \Hm_k)^{-1}$. Let $\Am_k = \Am_k^{(1)} + \Am_k^{(2)}$, where $\Am_k^{(1)} = \Am_k \odot \Id_{K} = \text{diag}\{a_{k, 1},..., a_{k, M}\}$, $a_{k, i}, \forall i \in \Mc$ is the $i$-th diagonal element of $\Am_k$,  and $\Am_k^{(2)} = \Am_k - \Am_k^{(1)}$. Since $\Hm_k$ follows a non-degenerate continuous distribution and the bottleneck constraint is finite, it is impossible to perfectly transmit the channel information to the destination as in \eqref{IB_problem_ub2}. To reduce the number of bits per channel use required for informing the destination node of the channel information, we only convey a compressed version of $\Am_k^{(1)}$.
We fix a finite grid of $J$ positive quantization points ${\cal B} = \{ b_1, \cdots, b_J \}$, where $b_1 \leq b_2 \leq \cdots \leq b_{J-1} < b_J$, $b_J = + \infty$, and define the following ceiling operation
\begin{equation}\label{ceiling}
\ceil[\big]{a}_{\cal B} = \min_{b \in {\cal B}} \{ a \leq b  \}.
\end{equation}
Then, each relay forces the noise power in sub-channels in  (\ref{eq:pinverse_x}) to belong to a finite set of quantized levels by adding artificial noise, i.e., $\tilde{\nv}_k' \sim \mathcal{CN}(\mathbf{0}, \text{diag}\{\ceil[\big]{a_{k, 1}}_{\cal B} - a_{k, 1}, ..., \ceil[\big]{a_{k, M}}_{\cal B} - a_{k, M}\})$, which is independent of $\xv$ and $\tilde{\nv}_k$. Hence the degraded version of $\tilde{\xv}_k$ can be obtained as follows, 
\begin{align}\label{eq:x_hat_k}
    \hat{\xv}_k &= \tilde{\xv}_k + \tilde{\nv}_k' = \xv + \tilde{\nv}_k + \tilde{\nv}_k' = \xv + \hat{\nv}_k, 
\end{align}
where $\hat{\nv}_k \sim \mathcal{CN}(\mathbf{0}, \tilde{\Am}_k^{(1)} + \Am_k^{(2)})$ for a given $\Hm_k$, $\tilde{\Am}_k^{(1)}  \triangleq  \text{diag}\{\ceil[\big]{a_{k, 1}}_{\cal B}, ..., \ceil[\big]{a_{k, M}}_{\cal B} \}$. 

Due to $\Am_k^{(2)}$, the elements in the noise vector $\hat{\nv}_k$ are correlated. 
To evaluate the bottleneck rate, we consider a new auxiliary variable 
\begin{align}\label{eq:X_hat_g}
    \hat{\xv}_k^{\rm g} = \xv + \hat{\nv}_k^{\rm g},
\end{align}
where $\hat{\nv}_k^{\rm g} \sim \mathcal{CN}(\mathbf{0}, \tilde{\Am}_k^{(1)})$. Notice that \eqref{eq:X_hat_g} can be seen as $M$ parallel scalar Gaussian sub-channels with noise power $\ceil[\big]{a_{k, i}}_{\cal B}$ for sub-channel $i$. Since each quantized noise level $\ceil[\big]{a_{k, i}}_{\cal B}$ only has $J$ possible values, it is possible for the relay to inform the destination node of the channel information via the constrained link. 
Then, according to \cite[(129)]{4544988}, 
the optimal representation of $ \hat{\xv}_k^{\rm g}$ given bottleneck constraint $C_k$ is
\begin{align}\label{eq:zv_k_g}
    \hat{\zv}^{\rm g}_{k} = \hat{\xv}_k^{\rm g} + \hat{\wv}_k,
\end{align}
where $\hat{\wv}_k$ is the complex Gaussian distribution with mean $\mathbf{0}$ and a diagonal covariance matrix $\Sigmam_{\hat{\wv}_k}$ whose elements are determined by the SNR of each sub-channel, i.e., the diagonal elements of $\tilde{\Am}_k^{(1)}$. 
We also add the noise vector $\hat{\wv}_k$ to $\hat{\xv}_k$ in (\ref{eq:x_hat_k}) and obtain its representation as follows
\begin{align}\label{eq:zv_k}
    \hat{\zv}_k = \hat{\xv}_k + \hat{\wv}_k.
\end{align}
Then, we have the following lemma.
\begin{lemma}\label{lem:LB_QCI}
    If $\tilde{\Am}_k^{(1)}$ is forwarded to the destination node for each channel realization by relay $k, ~\forall~ k \in \Kc$, with signal vectors $\hat{\xv}_k$ and $\hat{\xv}_k^{\rm g}$ in \eqref{eq:x_hat_k} and \eqref{eq:X_hat_g}, and their representations denoted as $\hat{\zv}_k$ in \eqref{eq:zv_k} and $\hat{\zv}_k^{\rm g}$ in \eqref{eq:zv_k_g},$ ~\forall~ \Tc \subseteq  \Kc$,  we have  
    \begin{align}
    I(\hat{\xv}_{\cal T}; \hat{\zv}_{\cal T}| \hat{\zv}_{\overline{{\cal T}}}, \tilde{\Am}_{\cal K}^{(1)}) &\leq I(\hat{\xv}_{\cal T}^{\rm g}; \hat{\zv}_{\cal T}^{\rm g}| \hat{\zv}_{\overline{{\cal T}}}^{\rm g}, \tilde{\Am}_{\cal K}^{(1)}), \label{eq:a} \\
    I(\xv; \hat{\zv}_{\Kc} | \tilde{\Am}_{\cal K}^{(1)}) &\geq I(\xv; \hat{\zv}_{\Kc}^{\rm g} | \tilde{\Am}_{\cal K}^{(1)}). \label{eq:b}
\end{align}
\end{lemma}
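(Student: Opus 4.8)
The plan is to prove \eqref{eq:a} and \eqref{eq:b} by one and the same device: in each case, write the conditional mutual information as a difference of two differential entropies, keep as the common term the one that is \emph{literally identical} for the correlated-noise signals $\{\hat{\xv}_k\},\{\hat{\zv}_k\}$ and for the Gaussian surrogates $\{\hat{\xv}_k^{\rm g}\},\{\hat{\zv}_k^{\rm g}\}$, and upper-bound the remaining conditional differential entropy of the correlated-noise signals by that of the surrogates. The argument rests on two facts: (i) the Gaussian extremal property --- for $(\Um,\Vm)$ with finite second moments, $h(\Um|\Vm)\le h(\Um_{\rm G}|\Vm_{\rm G})$ with $(\Um_{\rm G},\Vm_{\rm G})$ jointly Gaussian of the same covariance, equivalently $h(\Um|\Vm)\le\log\det(\pi e\,\Sigmam^{\rm LMMSE})$ with $\Sigmam^{\rm LMMSE}$ depending only on second moments; and (ii) $\mathbb{E}[\Am_k^{(2)}|\tilde{\Am}_k^{(1)}]=\mathbf{0}$, so that conditioned on $\tilde{\Am}_{\cal K}^{(1)}$ the correlated-noise signals share all the relevant joint second-order statistics with the (jointly Gaussian) surrogates.

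First I would establish (ii). For any unitary diagonal $\Thetam$, right-multiplying $\Hm_k$ by $\Thetam$ merely rephases the columns of $\Hm_k$, whence $\Hm_k\Thetam\overset{\rm d}{=}\Hm_k$ and therefore $\Thetam^H\Am_k\Thetam\overset{\rm d}{=}\Am_k$ for $\Am_k=\sigma^2(\Hm_k^H\Hm_k)^{-1}$. As $\Thetam^H\Am_k\Thetam$ has the same diagonal (hence the same $\tilde{\Am}_k^{(1)}$) as $\Am_k$, the pairs $(\Am_k,\tilde{\Am}_k^{(1)})$ and $(\Thetam^H\Am_k\Thetam,\tilde{\Am}_k^{(1)})$ are equal in law for every such $\Thetam$; thus $\mathbb{E}[\Am_k|\tilde{\Am}_k^{(1)}]=\Thetam^H\mathbb{E}[\Am_k|\tilde{\Am}_k^{(1)}]\Thetam$ for all $\Thetam$, which forces $\mathbb{E}[\Am_k|\tilde{\Am}_k^{(1)}]$ to be diagonal; since $\Am_k^{(2)}$ is the off-diagonal part of $\Am_k$, $\mathbb{E}[\Am_k^{(2)}|\tilde{\Am}_k^{(1)}]$ is the off-diagonal part of a diagonal matrix, i.e.\ $\mathbf{0}$. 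Because $\hat{\nv}_k|\Hm_k\sim\mathcal{CN}(\mathbf{0},\tilde{\Am}_k^{(1)}+\Am_k^{(2)})$, the channels are mutually independent, and $\xv$ is shared, this yields, conditioned on $\tilde{\Am}_{\cal K}^{(1)}$, ${\rm Cov}(\hat{\zv}_k)=\Id_M+\tilde{\Am}_k^{(1)}+\Sigmam_{\hat{\wv}_k}={\rm Cov}(\hat{\zv}_k^{\rm g})$, ${\rm Cov}(\hat{\zv}_k,\hat{\zv}_l)=\Id_M={\rm Cov}(\hat{\zv}_k^{\rm g},\hat{\zv}_l^{\rm g})$ for $k\ne l$ (from the common $\xv$), and likewise ${\rm Cov}(\hat{\xv}_k)=\Id_M+\tilde{\Am}_k^{(1)}={\rm Cov}(\hat{\xv}_k^{\rm g})$ --- all the joint covariances that enter below coincide with those of the Gaussian surrogates.

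For \eqref{eq:a}: $I(\hat{\xv}_{\cal T};\hat{\zv}_{\cal T}|\hat{\zv}_{\overline{{\cal T}}},\tilde{\Am}_{\cal K}^{(1)})=h(\hat{\zv}_{\cal T}|\hat{\zv}_{\overline{{\cal T}}},\tilde{\Am}_{\cal K}^{(1)})-h(\hat{\zv}_{\cal T}|\hat{\xv}_{\cal T},\hat{\zv}_{\overline{{\cal T}}},\tilde{\Am}_{\cal K}^{(1)})$. Since $\hat{\zv}_k=\hat{\xv}_k+\hat{\wv}_k$ with $\hat{\wv}_k$ independent of everything else given $\tilde{\Am}_k^{(1)}$ and of covariance a function of $\tilde{\Am}_k^{(1)}$, the subtracted term is $\sum_{k\in{\cal T}}\mathbb{E}[\log\det(\pi e\,\Sigmam_{\hat{\wv}_k})]$, the same expression for $\{\hat{\zv}_k^{\rm g}\}$. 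For the leading term, condition on a realization of $\tilde{\Am}_{\cal K}^{(1)}$, apply (i) with $\Um=\hat{\zv}_{\cal T}$, $\Vm=\hat{\zv}_{\overline{{\cal T}}}$, and note by (ii) that the matched-covariance Gaussian is exactly $(\hat{\zv}_{\cal T}^{\rm g},\hat{\zv}_{\overline{{\cal T}}}^{\rm g})$; hence $h(\hat{\zv}_{\cal T}|\hat{\zv}_{\overline{{\cal T}}},\tilde{\Am}_{\cal K}^{(1)})\le h(\hat{\zv}_{\cal T}^{\rm g}|\hat{\zv}_{\overline{{\cal T}}}^{\rm g},\tilde{\Am}_{\cal K}^{(1)})$ after averaging, and subtracting the common term gives \eqref{eq:a}. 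For \eqref{eq:b}: $I(\xv;\hat{\zv}_{\cal K}|\tilde{\Am}_{\cal K}^{(1)})=h(\xv|\tilde{\Am}_{\cal K}^{(1)})-h(\xv|\hat{\zv}_{\cal K},\tilde{\Am}_{\cal K}^{(1)})$, where $h(\xv|\tilde{\Am}_{\cal K}^{(1)})=h(\xv)=\log\det(\pi e\,\Id_M)$ because $\xv$ is independent of the channel --- this is the common term. Applying (i) and (ii) with $\Um=\xv$, $\Vm=\hat{\zv}_{\cal K}$ (conditioned on a realization of $\tilde{\Am}_{\cal K}^{(1)}$) gives $h(\xv|\hat{\zv}_{\cal K},\tilde{\Am}_{\cal K}^{(1)})\le h(\xv|\hat{\zv}_{\cal K}^{\rm g},\tilde{\Am}_{\cal K}^{(1)})$ --- the right side being the Gaussian posterior entropy of covariance $\big(\Id_M+\sum_{k\in{\cal K}}(\tilde{\Am}_k^{(1)}+\Sigmam_{\hat{\wv}_k})^{-1}\big)^{-1}$ --- and subtracting from the common $h(\xv)$ yields $I(\xv;\hat{\zv}_{\cal K}|\tilde{\Am}_{\cal K}^{(1)})\ge I(\xv;\hat{\zv}_{\cal K}^{\rm g}|\tilde{\Am}_{\cal K}^{(1)})$, i.e.\ \eqref{eq:b}.

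The step I expect to be the obstacle is \eqref{eq:b}, and the subtlety is \emph{which} entropy decomposition to use. The tempting split $I(\xv;\hat{\zv}_{\cal K}|\tilde{\Am}_{\cal K}^{(1)})=h(\hat{\zv}_{\cal K}|\tilde{\Am}_{\cal K}^{(1)})-h(\hat{\zv}_{\cal K}|\xv,\tilde{\Am}_{\cal K}^{(1)})$ leads nowhere: a mixture of Gaussians has no more entropy than the Gaussian of the same covariance, so \emph{both} terms are at most their Gaussian-surrogate counterparts and the sign of the difference is uncontrolled; and working per channel realization (conditioning on the full $\Hm_{\cal K}$, which makes everything jointly Gaussian) does not close the argument either, since conditioning on $\Hm_{\cal K}$ strictly increases the true-case quantity $I(\xv;\hat{\zv}_{\cal K}|\cdot)$, so a per-realization comparison does not descend to the statement at hand. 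The resolution is to keep the ``clean'' differential entropy ($h(\xv)$, resp.\ the artificial-noise entropy in \eqref{eq:a}) as the common term, bound only the conditional-entropy term, and --- the crux --- exploit that conditioning merely on the \emph{quantized} diagonal $\tilde{\Am}_{\cal K}^{(1)}$ (and not on $\Hm_{\cal K}$) leaves exactly the Gaussian-surrogate second-order statistics, so the Gaussian extremal bound is met with equality by $\{\hat{\zv}_k^{\rm g}\}$ and no slack is lost; the identity $\mathbb{E}[\Am_k^{(2)}|\tilde{\Am}_k^{(1)}]=\mathbf{0}$ is the linchpin.
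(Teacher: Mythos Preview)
Your argument is correct and takes a genuinely different route from the paper's.

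For \eqref{eq:a}, both proofs start from the decomposition $I=h(\hat{\zv}_{\cal T}\mid\hat{\zv}_{\overline{\cal T}},\tilde{\Am}_{\cal K}^{(1)})-\sum_{k\in{\cal T}}h(\hat{\wv}_k\mid\tilde{\Am}_k^{(1)})$, but differ in how the leading term is bounded. The paper proceeds case by case over ${\cal T}\subseteq{\cal K}$, writes the covariance of $\hat{\zv}_{\cal K}$ with the full noise covariances $\mathbb{E}[\hat{\nv}_k\hat{\nv}_k^H]$ in place, and applies Hadamard's inequality to pass to a (block-)diagonal determinant. You instead prove the phase-symmetry lemma $\mathbb{E}[\Am_k^{(2)}\mid\tilde{\Am}_k^{(1)}]=\mathbf{0}$, which shows that, conditioning only on the quantized diagonals, the \emph{full} joint second-order statistics of $(\xv,\hat{\zv}_1,\hat{\zv}_2)$ already coincide with those of the jointly Gaussian surrogates; the Gaussian conditional-entropy (LMMSE) bound then lands exactly on $h(\hat{\zv}_{\cal T}^{\rm g}\mid\hat{\zv}_{\overline{\cal T}}^{\rm g},\tilde{\Am}_{\cal K}^{(1)})$ in one stroke, for every ${\cal T}$ simultaneously and with no slack from Hadamard. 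For \eqref{eq:b}, the paper's argument is entirely different from yours: it writes $I(\xv;\hat{\zv}_{\cal K}\mid\tilde{\Am}_{\cal K}^{(1)})=\sum_{i=1}^M I(x_i;\hat{z}_{{\cal K},i}\mid\tilde{\Am}_{\cal K}^{(1)})+Q$ with $Q\ge 0$ (chain rule plus ``conditioning reduces entropy''), and then observes that each scalar marginal $\hat{z}_{k,i}$ given $\tilde{\Am}_{\cal K}^{(1)}$ has \emph{exactly} the same law as $\hat{z}_{k,i}^{\rm g}$ (the $i$-th diagonal of $\Am_k^{(2)}$ is zero by construction), so the per-coordinate mutual informations are equal. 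That buys an exact scalar identity and needs no off-diagonal computation; your $h(\xv)-h(\xv\mid\hat{\zv}_{\cal K},\cdot)$ decomposition together with the LMMSE bound is instead the same device that proved \eqref{eq:a}, yielding a unified argument that extends verbatim to $|{\cal K}|>2$.
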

\begin{proof}
       Due to space limitation, the proof is provided in Appendix A in \cite{}, which is a long version of this paper.
\end{proof}

Based on Lemma \ref{lem:LB_QCI}, a lower bound to the D-IB problem \eqref{IB_problem} can be obtained by solving the following problem
\begin{subequations}\label{IB_problem_QCI}
	\begin{align}
	\mathop {\max }\limits_{\{p(\hat{\zv}^{\rm g}_k| \hat{\xv}^{\rm g}_k, \tilde{\Am}_k^{(1)})\}} & I(\xv; \hat{\zv}^{\rm g}_{\cal K}|\tilde{\Am}_{\cal K}^{(1)}) \label{IB_problem_QCI_a}\\
	\text{s.t.} \quad\;\;\; &  I(\hat{\xv}^{\rm g}_{\cal T}; \hat{\zv}^{\rm g}_{\cal T}| \hat{\zv}^{\rm g}_{\overline{{\cal T}}}, \tilde{\Am}_{\cal K}^{(1)}) \leq \sum_{k \in {\cal T}} (C_k - B_k), \nonumber \\
    &~~~~ ~\forall~ {\cal T} \subseteq {\cal K}, \label{IB_problem_QCI_b}
	\end{align}
\end{subequations}
where $B_k$ is the number of bits required for compressing $\tilde{\Am}_{k}^{(1)}$.

We define a space $\Xi = \{(j_1, ..., j_M), ~\forall j_i \in \{1, 2, ..., J\} , i \in \{1, 2, ..., M\}\}$. In total there are $J^M$ points in the space. Let $\xi = (j_1, ..., j_M)$ denote a point in the space $\Xi$.
Its probability mass function is given by 
\begin{align}
    P_{\xi} = \text{Pr}\{\ceil[\big]{a_{k, 1}}_{\cal B} = b_{j_1}, ..., \ceil[\big]{a_{k, M}}_{\cal B} = b_{j_M}\}.
\end{align}
The joint entropy of  $\ceil[\big]{a_{k, i}}_{\cal B}, ~\forall~ i \in \mathcal{M}$ , i.e., the minimum number of bits to jointly source-encode $\ceil[\big]{a_{k, i}}_{\cal B}, ~\forall~ i \in \mathcal{M}$, is thus given by 
\begin{align}\label{eq:H_joint}
    H_{\text{joint}}^k = \sum_{\xi \in \Xi} -P_{\xi} \log(P_{\xi}).
\end{align}
However, it is difficult to obtain the joint entropy $H_{\rm joint}^k$ from \eqref{eq:H_joint}, as there are $J^{M}$ points in space $\Xi$. To reduce the complexity, we consider the (slightly) suboptimal but far more practical entropy coding of each noise level $\ceil[\big]{a_{k, i}}_{\cal B}, ~\forall~ i \in \Mc$ separately and obtain the upper bound of $H_{\rm joint}^k$ as 
\begin{align}
    H_{\text{sum}}^k &= \sum_{i=1}^M H_i^k = M H^k \nonumber \\
    & = \small -M \sum_{j_k = 1}^J \text{Pr} \{\ceil[\big]{a_{k}}_{\cal B} \!=\! b_{j_k}\} \log(\text{Pr} \{(\ceil[\big]{a_{k}}_{\cal B} \!=\! b_{j_k}) \},\!\!\!
\end{align}
where $H_i^k$ denotes the entropy of  $\ceil[\big]{a_{k, i}}_{\cal B}$ and the second equality holds  since it is stated in \cite[Appendix F]{IBxu} that when $M \leq \min_{k \in \Kc} N_k$, 
%
% $\Hm_k^H \Hm_k \sim \mathcal{CW}_M(\mathbf{0}, \Id_{M})$, matrix $(\Hm_k^H \Hm_k )^{-1}$
%
the matrix $\sigma^2(\Hm_k^H \Hm_k )^{-1}$ follows a complex inverse Wishart distribution and its diagonal elements are identically inverse chi square distributed whose pdf is presented in \cite[(A44)]{IBxu}. 
Hence, $H_1^k = H_2^k = ...= H_M^k$ and we neglect the subscript $i$.  
Therefore, $ \text{Pr} \{\ceil[\big]{a_{k}}_{\cal B} =b_{j_k}\}, ~\forall~ b_{j_k} \in \Bc$ can be computed and the D-IB problem becomes 
\begin{subequations}\label{IB_problem_QCI_3}
	\begin{align}
	\mathop {\max }\limits_{\{p(\hat{\zv}^{\rm g}_k| \hat{\xv}^{\rm g}_k, \tilde{\Am}_k^{(1)})\}} & I(\xv; \hat{\zv}^{\rm g}_{\cal K}|\tilde{\Am}_{\cal K}^{(1)}) \label{IB_problem_QCI_aaa}\\
	\text{s.t.} \quad\;\;\; &  I(\hat{\xv}^{\rm g}_{\cal T}; \hat{\zv}^{\rm g}_{\cal T}| \hat{\zv}^{\rm g}_{\overline{{\cal T}}}, \tilde{\Am}_{\cal K}^{(1)}) \!\leq \!\sum_{k \in {\cal T}} (C_k - H_{\rm sum}^k), \nonumber \\
	&~~~~~~~~\forall~ {\cal T}\subseteq {\cal K}. \label{IB_problem_QCI_bbb}
	\end{align}
\end{subequations}

Based on the definition of $\hat{\xv}_k^{\rm g}$ in \eqref{eq:X_hat_g}, the relay $k$ connects the source $\xv$ through $M$ independent parallel Gaussian sub-channels, where noise power follows the same distribution. Therefore, for each sub-channel of relay $k$, the capacity constraint to the destination can be denoted as $\frac{C_k - H_{\rm sum}^k}{M}$. The problem thus can be simplified as one scalar source and two relays with a single antenna problem as stated in \cite{xudistributed}. 

We denote the quantized SNR of certain sub-channel for relay $k$ in \eqref{eq:X_hat_g} when $\ceil[\big]{a_k}_{\cal B} = b_{j_k}$ by
\begin{equation}\label{rho_hat}
{\hat \rho}_{k, j_k} = \frac{1}{b_{j_k} \sigma^2}, ~\forall~ k \in \Kc, j_k \in {\cal J},
\end{equation}
where ${\cal J} = \{ 1, \cdots, J \}$, and define probability
\begin{equation}\label{P_jk}
{\hat P}_{k, j_k} = {\text {Pr}} \left\{ \ceil[\big]{a_k}_{\cal B} = b_{j_k} \right\}, ~\forall~ j_k \in {\cal J}.
\end{equation}
Note that from \eqref{rho_hat} and the definition of quantization points in $\Bc$, it is known that  if ${j_k} = J$, ${\hat \rho}_{k, j_k} =0$. In this case, we set $c_{k, j_k} = 0$. Besides, according to \cite[(17) - (19)]{xudistributed}, for $j_1 \in \Jc, j_2 \in \Jc$, the achievable rate $R_{j_1, j_2}$ can be obtained as follows by using (\ref{R_fixed_rho}),
\begin{align}\label{R_lb_QCI_gene}
R_{j_1, j_2} & = \mathop {\max }\limits_{\{ r_{k, j_1, j_2} \}}\! \left\{ \mathop {\min }\limits_{ {\cal T} \subseteq {\cal K}} \!\left\{\! \log\! \left[\! 1 \!+\! \sum_{k \in {\cal T}^C} {\hat \rho}_{k, j_k} \left( 1 \!-\! 2^{-r_{k, j_1, j_2}} \right) \!\right] \right.\right. \nonumber\\
& \left.\left. + \sum_{k \in {\cal T}} \left( c_{k, j_k} - r_{k, j_1, j_2} \right) \right\} \right\}, ~\forall~ j_1,~ j_2 \in {\cal J}.
\end{align}

Therefore, based on \cite{xudistributed}, a lower bound to the bottleneck rate, which we will denote by $R^{\text {lb}1}$, can be obtained by solving the following problem
\begin{subequations}\label{problem_QCI}
	\begin{align}
	\!\!\!\mathop {\max }\limits_{\left\{ c_{k, j_k} \right\}}  & \sum_{j_1=1}^J \sum_{j_2=1}^J M {\hat P}_{1, j_1} {\hat P}_{2, j_2} R_{j_1, j_2} \label{problem_QCI_a}\\
	\text{s.t.} \quad &  \sum_{j_k=1}^{J-1} {\hat P}_{k, j_k} c_{k, j_k} \leq \frac{C_k - H_{\rm sum}^k}{M}, ~\forall~ k \in {\cal K}, \label{problem_QCI_b}\\
	& c_{k, j_k} \geq 0, ~\forall~ k \in {\cal K},~ j_k \in {\cal J} \setminus J, \label{problem_QCI_c}\\
	& c_{k, J} = 0, ~\forall~ k \in {\cal K}, \label{problem_QCI_d}
	\end{align}
\end{subequations}
which can be solved similarly as (\ref{eq_problem}) by introducing $\beta_{j_1, j_2}$ for each $R_{j_1, j_2}, \forall j_1, j_2 \in {\cal J}$ and thus can be reformulated as a convex optimization problem, which can be solved by  standard convex optimization tools.

\subsection{MMSE-based scheme}
\label{MMSE_scheme}
In this subsection, we assume that each relay $k$ first produces the MMSE estimate of $\xv$ based on $(\yv_k, \Hm_k)$, and then source-encodes this estimate.
In particular, given $(\yv_k, \Hm_k)$, Denote 
\begin{align}
    \Fm_k =  (\Hm_k \Hm_k^H + \sigma^2 \Id_{N_k})^{-1} \Hm_k .
\end{align}
The MMSE estimate of $\xv$ obtained by relay $k$ is
\begin{align}\label{x_bar_k}
    \bar{\xv}_k &= \Fm^{H}_k \yv_k = \Fm^H_k \Hm_k \xv + \Fm_k^H \nv_k.
\end{align}
Taking ${\bar \xv}_k$ as a new observation, we assume that relay $k$ quantizes ${\bar \xv}_k$ by choosing $P_{\zv_k| {\bar \xv}_k}$ to be a conditional Gaussian distribution, i.e., 
\begin{equation}\label{z_bar_k}
\zv_k = {\bar \xv}_k + \qv_k, 
\end{equation}
where $\qv_k \sim \Cc\Nc(\mathbf{0}, D_k \Id_{M})$ and is independent of $\bar{\xv}_k$. 
Now we evaluate the compression rate and also the bottleneck rate.

To evaluate the compression rate and make sure that it satisfies the bottleneck constraint, we introduce an auxiliary Gaussian vector $\bar{\xv}_k^{\rm g}$ with the same second moment as $\bar{\xv}_k$, i.e.,
\begin{align}\label{eq:x_bar_g}
\bar{\xv}_k^{\rm g} &\sim \Cc\Nc(\bm 0, \bm \Sigma_{\bar{\xv}_k^{\rm g}}), \\
\bm \Sigma_{\bar{\xv}_k^{\rm g}} &= \mathbb{E}[\bar{\xv}_k \bar{\xv}_k^{H}] \nonumber \\
 &= \mathbb{E}\left[\Fm_k^H \Hm_k \Hm_k^H \Fm_k + \sigma^2 \Fm_k^H \Fm_k \right]. \label{eq:sigma_x}
\end{align}
Letting $\lambda_k$ denote the unordered eigenvalue of $\Hm_k \Hm_k^H$ whose rank is $T_k = \min(N_k, M)$, then based on the derivation in \cite[(A80), (A82) and (A83)]{IBxu}, \eqref{eq:sigma_x} can be computed as 
\begin{align}\label{eq:sigma_x2}
    \mathbb{E}\left[\Fm_k^H \Hm_k \Hm_k^H \Fm_k + \sigma^2 \Fm_k^H \Fm_k \right]  = \frac{T_k}{M}\mathbb{E}\left[ \frac{\lambda_k}{\lambda_k + \sigma^2}\right] \Id_M,
\end{align}
where $\mathbb{E}\left[ \frac{\lambda_k}{\lambda_k + \sigma^2}\right]$ can be computed based on the pdf of $\lambda_k$, which takes on similar form as \eqref{pdf_lambda}. 
As in (\ref{z_bar_k}), we also quantize $\bar{\xv}_k^{\rm g}$ by adding $\qv_k$ and obtain its representation $\bar{\zv}_k^{\rm g}$ as follows
\begin{align}
    \bar{\zv}_k^{\rm g} = \bar{\xv}_k^{\rm g} + \qv_k. 
\end{align}
Since Gaussian input maximizes the mutual information of a Gaussian additive noise channel,  we have
\begin{equation}\label{mutual_x_bar_z}
I({\bar \xv}_k; \zv_k) \leq I({\bar \xv}_{k}^{\rm g}; \bar{\zv}_{k}^{\rm g}) = \log \det\left( \Id_M + \frac{{\mathbb E} [\bar{\xv}_k \bar{\xv}_k^H ]}{D_k} \right).
\end{equation}
Let 
\begin{equation} \label{LB_C_K}
\log \det\left( \Id_M + \frac{{\mathbb E} [\bar{\xv}_k \bar{\xv}_k^H ]}{D_k} \right) = C_k.
\end{equation}
We thus have
\begin{equation}\label{mutual_XkbarZk_Ck}
I({\bar \xv}_k; \zv_k) \leq C_k.
\end{equation}
Based on \eqref{eq:sigma_x} and \eqref{eq:sigma_x2}, $D_k$ can be calculated as
\begin{align} \label{D_k}
D_k = \frac{\frac{T_k}{M} \mathbb{E}[\frac{\lambda_k}{\lambda_k + \sigma^2}]}{2^{\frac{C_k}{M}} - 1}.
\end{align}

\begin{lemma}\label{lem:LB_MMSE}
    If $I({\bar \xv}_k; \zv_k) \leq C_k, ~\forall~ k \in \Kc$ are satisfied, then the bottleneck constraint of the considered system, i.e., 
\begin{equation}\label{bt_constr}
I({\bar \xv}_{\cal T}; \zv_{\cal T}| \zv_{\overline{{\cal T}}}) \leq \sum_{k \in {\cal T}} C_k, ~\forall~ {\cal T} \subseteq {\cal K},
\end{equation}
can be guaranteed. 
\end{lemma}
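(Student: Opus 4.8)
The plan is to derive the subset bottleneck constraint \eqref{bt_constr} directly from the per-relay constraints \eqref{mutual_XkbarZk_Ck}, using only the structure of the test channel \eqref{z_bar_k}. The single structural fact I would record first is that the quantization noises $\qv_1,\qv_2$ are generated locally: they are mutually independent and jointly independent of $(\bar{\xv}_1,\bar{\xv}_2)$, so $\qv_1$, $\qv_2$ and $(\bar{\xv}_1,\bar{\xv}_2)$ are mutually independent. A convenient consequence is that for every $\Tc\subseteq\Kc$ the noise block $\qv_\Tc$ is independent of the pair $(\bar{\xv}_\Tc,\zv_{\overline{\Tc}})$, since $(\bar{\xv}_\Tc,\zv_{\overline{\Tc}})$ is a deterministic function of $(\bar{\xv}_1,\bar{\xv}_2,\qv_{\overline{\Tc}})$, which is independent of $\qv_\Tc$. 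Everything below is finite because $\zv_\Tc$ has a density (it contains the non-degenerate Gaussian $\qv_\Tc$) with finite covariance.

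Fix a nonempty $\Tc\subseteq\Kc$. First I would expand the conditional mutual information as a difference of differential entropies,
\begin{equation}
I(\bar{\xv}_\Tc;\zv_\Tc|\zv_{\overline{\Tc}}) = h(\zv_\Tc|\zv_{\overline{\Tc}}) - h(\zv_\Tc|\zv_{\overline{\Tc}},\bar{\xv}_\Tc).
\end{equation}
For the subtracted term, conditioning on $\bar{\xv}_\Tc$ turns $\zv_\Tc$ into the shift $\bar{\xv}_\Tc+\qv_\Tc$, so by translation invariance of differential entropy together with the independence noted above, $h(\zv_\Tc|\zv_{\overline{\Tc}},\bar{\xv}_\Tc)=h(\qv_\Tc)=\sum_{k\in\Tc}h(\qv_k)$, the last equality by mutual independence of the $\qv_k$. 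For the first term, conditioning only reduces differential entropy and differential entropy is subadditive, so $h(\zv_\Tc|\zv_{\overline{\Tc}})\le h(\zv_\Tc)\le\sum_{k\in\Tc}h(\zv_k)$. Combining the two bounds,
\begin{equation}
I(\bar{\xv}_\Tc;\zv_\Tc|\zv_{\overline{\Tc}}) \le \sum_{k\in\Tc}\bigl(h(\zv_k)-h(\qv_k)\bigr) = \sum_{k\in\Tc} I(\bar{\xv}_k;\zv_k) \le \sum_{k\in\Tc} C_k,
\end{equation}
where the middle equality uses once more that $h(\zv_k|\bar{\xv}_k)=h(\qv_k)$, and the final inequality is the hypothesis \eqref{mutual_XkbarZk_Ck}. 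The case $\Tc=\emptyset$ is trivial, and since $\Kc=\{1,2\}$ this exhausts all subsets, proving the lemma.

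I do not expect a genuine obstacle: the only point requiring care is the independence bookkeeping — that $\qv_\Tc$ is independent of the \emph{joint} vector $(\bar{\xv}_\Tc,\zv_{\overline{\Tc}})$, not merely of each coordinate, which is exactly what "each relay draws its compression noise locally" buys us — together with checking that the differential entropies in the decomposition are finite so that the manipulations are legitimate. It is worth noting that the argument does not even use the conditional independence of $\bar{\xv}_1,\bar{\xv}_2$ given $\xv$; it is essentially the standard fact that a Gaussian test-channel compression meeting its individual link constraints automatically meets all of the associated subset (polymatroid-type) constraints.
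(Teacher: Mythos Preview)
Your proof is correct. The paper does not give its own proof of this lemma---it defers to the companion scalar-case paper---but the argument you wrote is precisely the standard one: use the additive test-channel structure to identify $h(\zv_{\cal T}\mid \zv_{\overline{\cal T}},\bar{\xv}_{\cal T})=h(\qv_{\cal T})$, drop the conditioning and apply subadditivity on the positive term, and recombine into $\sum_{k\in{\cal T}} I(\bar{\xv}_k;\zv_k)$. Your independence bookkeeping (that $\qv_{\cal T}$ is independent of $(\bar{\xv}_{\cal T},\zv_{\overline{\cal T}})$ jointly, because the latter is a function of $(\bar{\xv}_{\cal K},\qv_{\overline{\cal T}})$) and your finiteness check are both in order, so there is no gap.
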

\begin{proof}
     Proof is similar to that in \cite{xudistributed}, so it is neglected here due to space limitation. 
\end{proof}

According to Lemma \ref{lem:LB_MMSE}, the bottleneck constraint is satisfied with the proper design of $D_k$ in \eqref{D_k}.
The next step is to evaluate $I(\xv; \zv_1, \zv_2)$. 
We first derive a lower bound to $I(\xv; \zv_1, \zv_2)$ as
\begin{subequations}
\begin{align}
I(\xv; \zv_1, \zv_2)
&= h(\zv_1, \zv_2) - h(\zv_1, \zv_2| \xv) \\
&\geq h(\zv_1, \zv_2| \Hm_1, \Hm_2) - h(\zv_1, \zv_2| \xv) \label{eq:aa} \\
&= h(\zv_1, \zv_2| \Hm_1, \Hm_2) \!-\! h(\zv_1| \xv) \!-\! h(\zv_2| \xv),\label{mutual_XZ1Z2}
\end{align}
\end{subequations}
where \eqref{eq:aa} is satisfied since conditioning reduces differential entropy, and \eqref{mutual_XZ1Z2} holds since $\zv_2$ is independent of $\zv_1$ given $\xv$.
Then, we evaluate the terms in (\ref{mutual_XZ1Z2}) separately.
Since $\xv$, $\nv_k$, and $\qv_k$ are independent variables, $\zv_1$ and $\zv_2$ are jointly Gaussian distributions under the condition of $(\Hm_1, \Hm_2)$. Therefore, according to \eqref{x_bar_k},  \eqref{z_bar_k} and \cite[A80, A82]{IBxu}, the covariance matrix terms $\Km_{i, j}, ~\forall~, i, j \in \Kc$ are given by 
\begin{align}
    \Km_{k, k} &= \mathbb{E}_{\{\zv_k
    \}}\left[\zv_k \zv_k^H | \Hm_1, \Hm_2\right]\nonumber \\
    &\!=\! \Fm_k^H \Hm_k \Hm_k^H \Fm_k \!+\! \sigma^2 \Fm_k^H \Fm_k \!+\! D_k \Id_M , 
    \forall k \in \Kc, \\
    \Km_{i, j} &= \mathbb{E}_{\{\zv_i, \zv_j\}}\left[\zv_i \zv_j^H|\Hm_1, \Hm_2 \right] \nonumber \\
    &\!=\! \Fm_i^H \Hm_i \Hm_j^H \Fm_j ,~\forall~i,j \in \Kc,~ i \neq j, 
\end{align}
where $D_k$ is given in \eqref{D_k}.
Hence, 
\begin{align}\label{mutual_XZ1Z2_S1S2}
& h(\zv_1, \zv_2| \Hm_1, \Hm_2)\nonumber\\
= & {\mathbb E}_{\{\Hm_1, \Hm_2\}} \left[ \log \left( (\pi e)^{2M} \det \left( \begin{bmatrix}  \Km_{1, 1}& \Km_{1, 2}\\\Km_{2, 1} & \Km_{2, 2}\end{bmatrix}
\right) \right) \right].
\end{align}
Moreover, based on the fact that conditioning reduces differential entropy and according to \cite[(81)]{IBxu} and \eqref{z_bar_k}, we have
\begin{align}\label{zk_X}
h(\zv_k| \xv) &= h((\zv_k - \mathbb{E}_{\zv_k}[\zv_k |\xv])| \xv) \nonumber \\
&= h\left((\Fm_k^H \Hm_k - \mathbb{E}[\Fm_k^H \Hm_k]) \xv + \Fm_k^H \nv_k + \qv_k | \xv\right) \nonumber \\
&\leq h\left((\Fm_k^H \Hm_k \!-\! \mathbb{E}[\Fm_k^H \Hm_k]) \xv \!+\! \Fm_k^H \nv_k \!+\! \qv_k \right), 
\end{align}
where $\mathbb{E}_{\zv_k}[\zv_k |\xv] = \mathbb{E}_{\{\Hm_k, \nv_k, \qv_k\}}[\Fm_k^H \Hm_k \xv + \Fm_k^H \nv_k + \qv_k | \xv] = (\mathbb{E}[\Fm_k^H \Hm_k]) \xv$.
Moreover, since the 
Gaussian distribution maximizes the entropy over all distributions with the same variance  \cite{el2011network}, \eqref{zk_X} is upper bound by
\begin{align}\label{zk_X_2}
    h(\zv_k|\xv) \leq \log((\pi e)^M \det(\Gm_k)),
\end{align}
% \begin{align}
%     h\left((\Fm_k^H \Hm_k - \mathbb{E}[\Fm_k^H \Hm_k]) \xv + \Fm_k^H \nv_k + \qv_k \right) \leq ,
% \end{align}
where based on \cite[A84]{IBxu}, $\Gm_k$ is given by
\begin{align}\label{G_mat}
    \Gm_k \!&=\! \mathbb{E}[\Fm_k^H \!\Hm_k \Hm_k^H \!\Fm_k] \!-\!  \left(\mathbb{E}[\Fm_k^H \Hm_k ]\right)^2 \!+\! \sigma^2 \mathbb{E}[\Fm_k^H \Fm_k] \!+\! D_k \Id_M \nonumber \\
    \!&=\! \left\{\frac{T_k}{M} \mathbb{E}\left[\!\frac{\lambda_k}{\lambda_k \!+\! \sigma^2}\!\right] \!-\! \frac{T_k^2}{M^2} \mathbb{E}^2\left[\!\frac{\lambda_k}{\lambda_k \!+\! \sigma^2}\!\right]\! \!+\! D_k \right\} \Id_M.
\end{align}
Substituting (\ref{mutual_XZ1Z2_S1S2}) and (\ref{zk_X_2}) into (\ref{mutual_XZ1Z2}), a lower bound to $I(\xv; \zv_1, \zv_2)$ can be obtained as follows
\begin{align}\label{R_lb_MMSE}
R^{\text {lb}2} & = {\mathbb E}_{\{\Hm_1, \Hm_2\}} \left[ \log  \det \left( \begin{bmatrix}  \Km_{1, 1}& \Km_{1, 2}\\\Km_{2, 1} & \Km_{2, 2}\end{bmatrix}
\right) \right] \nonumber\\
& ~~~~~~~~- \sum_{k=1}^2 \log\left(
\det (\Gm_k)\right).
\end{align}

\section{Numerical Results}
\label{simulation}

In this section, we investigate the two lower bounds obtained by the proposed achievable schemes and compare them with the informed receiver upper bound. In the simulation, we consider $M = N_1 = N_2 = 3$ and assume the same bottleneck constraint, i.e., $C_1 = C_2 = C$. For QCI scheme, we choose the number of quantization points $J = 2^B$, where $B$ denotes the quantization bits, and we choose the quantization levels as quantiles such that we obtain the uniform pmf ${\hat P}_{k, j_k} = \frac{1}{J}, \forall k \in {\cal K}, j_k \in {\cal J}$.
% When performing the TCI scheme, we vary $S_{\text {th}}$ from $0$ to $2$ in step $0.1$ and choose the one which gives the largest rate value.

\begin{figure}
	\centering
	\includegraphics[scale=0.36]{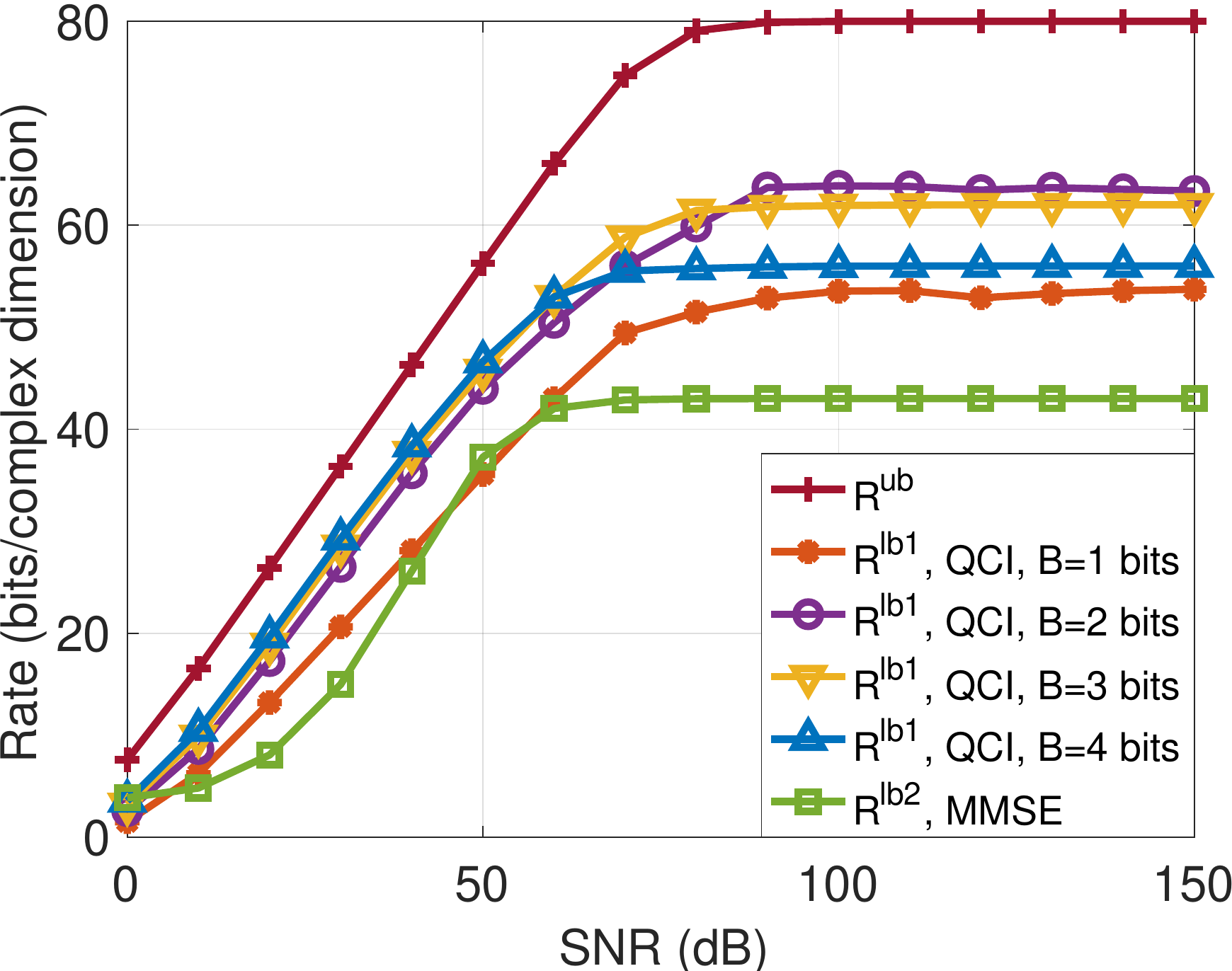}
	\vspace{-1em}
	\caption{Upper and lower bounds to the bottleneck rate versus $\rho$ with $C = 40$ bits/complex dimension.}
	\label{R_VS_rho}
\end{figure}

\begin{figure}
	\centering
	\includegraphics[scale=0.36]{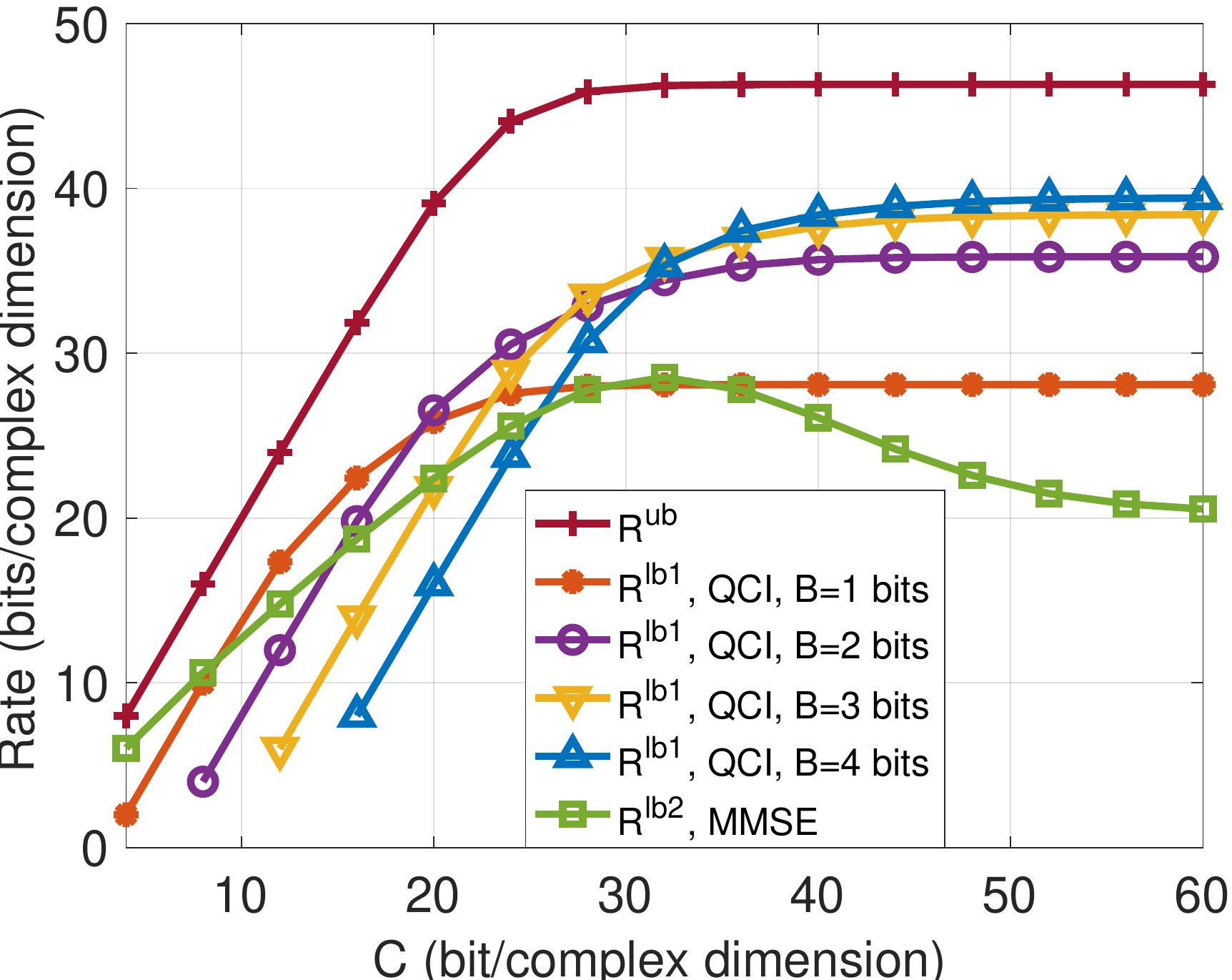}
	\vspace{-1em}
	\caption{Upper and lower bounds to the bottleneck rate versus $C$ with $\rho = 40$~dB.}
	\label{R_VS_C}
\end{figure}

In Fig.~\ref{R_VS_rho}, the upper and lower bounds are plotted against the SNR $\rho = 10 \log_{10}( \frac{1}{\sigma^2}) ~dB$ with $C = 40$ bits/complex dimension. Note that as $\rho$ grows, the upper bound can approach the rate limit, the sum capacity of the two relay-destination links, i.e., $C_1 + C_2$. In addition, for relatively small $\rho$, $R^{\text {lb}1}$ obtained by the QCI scheme with $4$ quantization bits gets quite close to the upper bound, while for relatively large $\rho$, with proper tuning of the quantization bits, the QCI scheme gets closer to the upper bound compared to $R^{\text {lb}2}$.  

Fig.~\ref{R_VS_C} shows the effect of constraint $C$ with SNR $= 40~dB$. 
Since the QCI scheme uses part of the link capacity to transmit the quantized noise power, large quantization bits are impossible for small link capacities. Therefore, QCI schemes with different quantization bits start from different link capacity points.  
As $C$ increases, except for $R^{\text {lb}2}$, the QCI bounds grow monotonically and converge to constants. The QCI has better performance since $R^{\text {lb}1}$ with $4$ quantization bits approximately matches $R^{\text {ub}}$ in the high capacity region. However, $R^{\text {lb}2}$ first increases and then decreases with $C$. From \eqref{D_k}, $D_k$ is monotonically decreasing as $C_k$ increases. Note that in \eqref{R_lb_MMSE} the second term is decreasing because $D_k$ is decreasing, the same as the first term. Therefore, the rate $R^{\text{lb}2}$ is not monotonically increasing as $C$ increases due to two relaxation in \eqref{eq:aa} and \eqref{zk_X}.

\section{Conclusions}
\label{conclusion}

This work extends the IB problem of the two-relay scalar case in \cite{xudistributed} to a Gaussian diamond model with Rayleigh fading MIMO channel.
Due to the bottleneck constraint, the destination node cannot get the perfect CSI from the relays. To evaluate the bottleneck rate, two achievable schemes and an upper bound are derived and compared in simulation in terms of link capacity and SNR.  
Our results show that with simple symbol-by-symbol relay processing and compression, we can obtain a bottleneck rate close to the upper bound for a wide range of relevant system parameters.
In the future, instead of considering the case where the relay has more antennas than the source, we will consider a difficult case where the source has two antennas while the two relays each have one antenna.
\section*{Acknowledgments}
The work of H. Xu has been supported by the European Union's Horizon 2020 Research and Innovation Programme under Marie Skłodowska-Curie Grant No. 101024636 and the Alexander von Humboldt Foundation. And the work of S. Shamai has been supported by the
European Union's Horizon 2020 Research and Innovation Programme,
grant agreement No. 694630 and by the German Research Foundation (DFG) via
the German-Israeli Project Cooperation (DIP), under Project SH 1937/1-1.
\clearpage

\bibliographystyle{IEEEtran}
\bibliography{IEEEabrv,Ref}

\clearpage

\appendices
\section{Proof of Lemma \ref{lem:LB_QCI}}
\label{sec:prove_LB_QCI}

If $\tilde{\Am}_{\cal K}^{(1)}$ is forwarded to destination node, we can derive the upper bound to $I(\hat{\xv}_{\cal T}; \hat{\zv}_{\cal T}| \hat{\zv}_{\overline{{\cal T}}}, \tilde{\Am}_{\cal K}^{(1)}), ~\forall~ \Tc \subseteq  \Kc$ as 
% $I(\hat{\xv}^{\rm g}_{\cal T}; \hat{\zv}^{\text g}_{\cal T}| \hat{\zv}^{\text g}_{\overline{{\cal T}}}, \tilde{\Am}_{\cal K}^{(1)})$
\begin{align}\label{eq:constraint_inequality}
    &I(\hat{\xv}_{\cal T}; \hat{\zv}_{\cal T}| \hat{\zv}_{\overline{{\cal T}}}, \tilde{\Am}_{\cal K}^{(1)}) \nonumber \\
    =& h(\hat{\zv}_{\cal T}| \hat{\zv}_{\overline{{\cal T}}}, \tilde{\Am}_{\cal K}^{(1)}) - h(\hat{\zv}_{\cal T}| 
    \hat{\zv}_{\overline{{\cal T}}}, \hat{\xv}_{\cal T}, \tilde{\Am}_{\cal K}^{(1)}) \nonumber \\
    =& h(\hat{\zv}_{\cal T}| \hat{\zv}_{\overline{{\cal T}}}, \tilde{\Am}_{\cal K}^{(1)}) - h(\hat{\zv}_{\cal T}| \hat{\xv}_{\cal T}, \tilde{\Am}_{\cal K}^{(1)}) \nonumber \\
    \overset{(a)}{=} &h(\hat{\zv}_{\cal T}| \hat{\zv}_{\overline{{\cal T}}}, \tilde{\Am}_{\cal K}^{(1)}) - \sum_{k \in T} h(\hat{\zv}_{k}| \hat{\xv}_{k} \tilde{\Am}_{\cal K}^{(1)}) \nonumber \\
    \overset{(b)}{=} &h(\hat{\zv}_{\cal T}| \hat{\zv}_{\overline{{\cal T}}}, \tilde{\Am}_{\cal K}^{(1)}) - \sum_{k \in T} h(\hat{\zv}_{k}^{\rm g}| \hat{\xv}_{k}^{\rm g}, \tilde{\Am}_{\cal K}^{(1)}),
\end{align}
where $(a)$ holds since $\hat{\zv}_1$ and $\hat{\zv}_2$ are independent given $\hat{\xv}_{\Kc}$ and $(b)$ is satisfied since  $h(\hat{\zv}_k |\hat{\xv}_k, \tilde{\Am}_{\cal K}^{(1)})$ is equal to $\log((\pi e)^M\det(\bm \Sigma_{\hat{\wv}_{k}}))$, which is the same as $h(\hat{\zv}_k^{\rm g} |\hat{\xv}_k^{\rm g}, \tilde{\Am}_{\cal K}^{(1)})$.
In order to prove \eqref{eq:a}, we need to prove $h(\hat{\zv}_{\cal T}| \hat{\zv}_{\overline{{\cal T}}}, \tilde{\Am}_{\cal K}^{(1)}) \leq h(\hat{\zv}_{\cal T}^{\rm g}| \hat{\zv}_{\overline{{\cal T}}}^{\rm g}, \tilde{\Am}_{\cal K}^{(1)}), ~\forall~ \Tc \subseteq \Kc $.
Therefore, we discuss different cases of $\Tc$. First, when $\Tc = \emptyset$, it is obvious that 
\begin{align}\label{T_0}
    I(\hat{\xv}_{\cal T}; \hat{\zv}_{\cal T}| \hat{\zv}_{\overline{{\cal T}}}, \tilde{\Am}_{\cal K}^{(1)})  = I(\hat{\xv}_{\cal T}^{\rm g}; \hat{\zv}_{\cal T}^{\rm g}| \hat{\zv}^{\rm g}_{\overline{{\cal T}}}, \tilde{\Am}_{\cal K}^{(1)}) = 0.
\end{align}
In the second case when $\Tc = \{1, 2\}$, since Gaussian distribution maximizes the differential entropy with the same variance, an upper bound to $h(\hat{\zv}_{\Kc}| \tilde{\Am}_{\cal K}^{(1)})$ is given by
\begin{align}\label{eq:h_zk}
    h(\hat{\zv}_{\Kc}| \tilde{\Am}_{\cal K}^{(1)}) \leq \log((\pi e)^{2 M}  \Sigmam_{\hat{\zv}_{\Kc}}).
\end{align}
Note that $\hat{\zv}_1$ and $\hat{\zv}_2$ are joint Gaussian distribution if the second order of noises, $\mathbf{E}[\hat{\nv}_k \hat{\nv}_k^H], ~\forall~ k \in \Kc$ are known. Assuming that the noises are known, the covariance matrix $ \Sigmam_{\hat{\zv}_{\Kc}}$ in \eqref{eq:h_zk} is given by
\begin{align}\label{eq:sigma_zk}
   \Sigmam_{\hat{\zv}_{\Kc}} &=  \begin{bmatrix}  \mathbf{E}[\hat{\zv}_1\hat{\zv}_1^H]& \mathbf{E}[\hat{\zv}_1\hat{\zv}_2^H]\\\ \mathbf{E}[\hat{\zv}_2\hat{\zv}_1^H] & \mathbf{E}[\hat{\zv}_2\hat{\zv}_2^H]\end{bmatrix} \nonumber \\
   &= \small \begin{bmatrix}  \Id_M +  \mathbf{E}[\hat{\nv}_1 \hat{\nv}_1^H] + \Sigmam_{\hat{\wv}_1} & \Id_M \\\ \Id_M & \Id_M + \mathbf{E}[\hat{\nv}_2 \hat{\nv}_2^H] + \Sigmam_{\hat{\wv}_2}\end{bmatrix}.
\end{align}
Based on \eqref{eq:sigma_zk} the Hadamard's inequality, we have 
\begin{align}\label{eq:det_Zk}
    \det ~\Sigmam_{\hat{\zv}_{\Kc}}   &\leq \det \begin{bmatrix}  \Id_M +  \tilde{\Am}_1^{(1)} + \Sigmam_{\hat{\wv}_1} & 0 \\ 0 & \Id_M +  \tilde{\Am}_2^{(1)}  + \Sigmam_{\hat{\wv}_2}\end{bmatrix} \nonumber \\
    &= \det~ \Sigmam_{\hat{\zv}_{\Kc}^{\rm g} | \tilde{\Am}_{\Kc}^{(1)}}.
\end{align}
Therefore, combining \eqref{eq:h_zk} and \eqref{eq:det_Zk}, an upper bound to $h(\hat{\zv}_{\Kc}| \tilde{\Am}_{\cal K}^{(1)})$ is given by
\begin{align} h(\hat{\zv}_{\cal K}| \tilde{\Am}_{\cal K}^{(1)}) \leq h(\hat{\zv}_{\cal K}^{\rm g}|\tilde{\Am}_{\cal K}^{(1)}),
\end{align}
and we thus have 
\begin{align}\label{T_12}
I(\hat{\xv}_{\Kc}; \hat{\zv}_{\Kc}| \tilde{\Am}_{\cal K}^{(1)}) \leq I(\hat{\xv}_{\Kc}^{\rm g}; \hat{\zv}_{\Kc}^{\text g}| \tilde{\Am}_{\cal K}^{(1)}).
\end{align}
When $\Tc = \{1\}$, i.e., only $\hat{\zv}_1$ is selected, since Gaussian distribution maximizes the differential entropy with the same variance, then we have an upper bound to $h(\hat{\zv}_{1}| \hat{\zv}_{2}, \tilde{\Am}_{\cal K}^{(1)})$ as
\begin{align}\label{eq:h_z1_z2}
    h(\hat{\zv}_{1}| \hat{\zv}_{2}, \tilde{\Am}_{\cal K}^{(1)}) &\leq \log\left((\pi e)^M \det \left(\Sigmam_{\hat{\zv}_1|\hat{\zv}_2}\right)\right).
\end{align}
Assuming that the second order of noises, $\mathbf{E}[\hat{\nv}_k \hat{\nv}_k^H], ~\forall~ k \in \Kc$ are known, $\hat{\zv}_1$ and $\hat{\zv}_2$ are joint Gaussian distribution. According to \cite{eaton2007multivariate}, the covariance matrix of conditional Gaussian distribution $\Sigmam_{\hat{\zv}_1|\hat{\zv}_2}$ is given by 
\begin{align}
    \Sigmam_{\hat{\zv}_1|\hat{\zv}_2} &= \mathbf{E}[{\hat{\zv}_1}{\hat{\zv}_1}^H] - \mathbf{E}[\hat{\zv}_1 \hat{\zv}_2^H] \left(\mathbf{E}[\hat{\zv}_2 \hat{\zv}_2^H]\right)^{-1}\mathbf{E}[\hat{\zv}_2 \hat{\zv}_1^H] \nonumber \\
    &= \Id_M +  \mathbf{E}[\hat{\nv}_1 \hat{\nv}_1^H] + \Sigmam_{\hat{\wv}_1} \nonumber \\
    &- \left(\Id_M + \mathbf{E}[\hat{\nv}_2 \hat{\nv}_2^H] + \Sigmam_{\hat{\wv}_2} \right)^{-1} \nonumber \\
    &= \Om \left(\Id_M +   \mathbf{E}[\hat{\nv}_2 \hat{\nv}_2^H] + \Sigmam_{\hat{\wv}_2} \right)^{-1},
\end{align}
where $\Om = \small \left(\Id_M +  \mathbf{E}[\hat{\nv}_1 \hat{\nv}_1^H] + \Sigmam_{\hat{\wv}_1}\right)\left(\Id_M +   \mathbf{E}[\hat{\nv}_2 \hat{\nv}_2^H] + \Sigmam_{\hat{\wv}_2} \right) - \Id_M$.
Using the fact that $\det(\Am \Bm) = \det(\Am) \det(\Bm)$ and based on Hadamard's inequality, we have 
\begin{align}\label{eq:det_sigma_z1_z2}
    &\det (\Sigmam_{\hat{\zv}_1|\hat{\zv}_2}) \nonumber \\
    = &\det (\Om)  \det\left(\Id_M +  \mathbf{E}[\hat{\nv}_2 \hat{\nv}_2^H] + \Sigmam_{\hat{\wv}_2} \right)^{-1} \nonumber \\
    \leq & \det\left(\small{(\Id_M +  \tilde{\Am}_1^{(1)} + \Sigmam_{\hat{\wv}_1})(\Id_M +  \tilde{\Am}_2^{(1)} + \Sigmam_{\hat{\wv}_2}) - \Id_M}\right) \nonumber \\
    \times &\det\left(\Id_M +  \tilde{\Am}_2^{(1)} +  \Sigmam_{\hat{\wv}_2} \right)^{-1} \nonumber \\
    =& \det \left(\Id_M +  \tilde{\Am}_1^{(1)} +  \Sigmam_{\hat{\wv}_1} - (\Id_M +  \tilde{\Am}_2^{(1)} +  \Sigmam_{\hat{\wv}_2})^{-1}\right) \nonumber \\
    =& \mathbf{E}[\hat{\zv}_1^{\rm g} {\hat{\zv}_1}^{{\rm g} H}] - \mathbf{E}[\hat{\zv}_1^{\rm g} {\hat{\zv}_2}^{{\rm g} H}] \left(\mathbf{E}[\hat{\zv}_2^{\rm g} {\hat{\zv}_2}^{{\rm g} H}]\right)^{-1}\mathbf{E}[\hat{\zv}_2^{\rm g} {\hat{\zv}_1}^{{\rm g} H}] \nonumber \\
    =& \det (\Sigmam_{\hat{\zv}_1^{\rm g}|\hat{\zv}_2^{\rm g}, \tilde{\Am}_{\cal K}^{(1)}}).
\end{align}
Therefore, based on \eqref{eq:h_z1_z2} and \eqref{eq:det_sigma_z1_z2}, we have
    \begin{align}
    h(\hat{\zv}_{1}| \hat{\zv}_{2}, \tilde{\Am}_{\cal K}^{(1)})  \leq h( \hat{\zv}_{1}^{\rm g}| \hat{\zv}^{\rm g}_{2}, \tilde{\Am}_{\cal K}^{(1)}),
\end{align}
and thus we have
    \begin{align}\label{T_1}
    I(\hat{\xv}_{1}; \hat{\zv}_{1}| \hat{\zv}_{2}, \tilde{\Am}_{\cal K}^{(1)})  \leq I(\hat{\xv}_{1}^{\rm g}; \hat{\zv}_{1}^{\rm g}| \hat{\zv}^{\rm g}_{2}, \tilde{\Am}_{\cal K}^{(1)}).
\end{align}
In the similar way, we can prove that 
\begin{align}\label{T_2}
    I(\hat{\xv}_{2}; \hat{\zv}_{2}| \hat{\zv}_{1}, \tilde{\Am}_{\cal K}^{(1)})  \leq I(\hat{\xv}_{2}^{\rm g}; \hat{\zv}_{2}^{\rm g}| \hat{\zv}^{\rm g}_{1}, \tilde{\Am}_{\cal K}^{(1)}).
\end{align}
Therefore, based on \eqref{T_0}, \eqref{T_12}, \eqref{T_1} and \eqref{T_2}, we come to the conclusion that
\begin{align}
    I(\hat{\xv}_{\cal T}; \hat{\zv}_{\cal T}| \hat{\zv}_{\overline{{\cal T}}}, \tilde{\Am}_{\cal K}^{(1)}) \leq I(\hat{\xv}_{\cal T}^{\text g}; \hat{\zv}_{\cal T}^{\text g}| \hat{\zv}_{\overline{{\cal T}}}^{\text g}, \tilde{\Am}_{\cal K}^{(1)}),  ~\forall~ \Tc \subseteq  \Kc.
\end{align}
If the bottleneck constraint $I(\hat{\xv}_{\cal T}^{\text g}; \hat{\zv}_{\cal T}^{\text g}| \hat{\zv}_{\overline{{\cal T}}}^{\text g}, \tilde{\Am}_{\cal K}^{(1)}) \leq \sum_{k \in \Tc} C_k,  \forall \Tc \subseteq  \Kc$ is satisfied, then $I(\hat{\xv}_{\cal T}; \hat{\zv}_{\cal T}| \hat{\zv}_{\overline{{\cal T}}}, \tilde{\Am}_{\cal K}^{(1)}) \leq \sum_{k \in \Tc} C_k,  ~\forall~ \Tc \subseteq  \Kc$ can be guaranteed.

Then, we prove \eqref{eq:b} based on \cite[(A52)]{IBxu},
 \begin{align}
     I(\xv; \hat{\zv}_{\Kc} | \tilde{\Am}_{\cal K}^{(1)}) &\overset{(a)}{=} \sum_{i=1}^M I(x_i; \hat{z}_{\Kc, i} | \tilde{\Am}_{\cal K}^{(1)}) + Q \nonumber \\
      &\geq \sum_{i=1}^M I(x_i; \hat{z}_{\Kc, i} | \tilde{\Am}_{\cal K}^{(1)}) \nonumber \\
     &\overset{(b)}{=}\sum_{i=1}^M I(x_i; \hat{z}_{\Kc, i}^{\rm g} | \tilde{\Am}_{\cal K}^{(1)}) \nonumber \\
      &\overset{(c)}{=}I(\xv; \hat{\zv}_{\Kc}^{\rm g} | \tilde{\Am}_{\cal K}^{(1)}), 
 \end{align}
 where $x_i$ is the $i$-th element of $\xv$ and $\hat{z}_{\Kc, i}$ denotes the element in the $i$-th dimension of $\hat{\zv}_{\Kc}$, i.e., $\{\hat{z}_{1, i}, \hat{z}_{2, i}\}$, $(a)$ holds using the chain rule of mutual information where $Q$ is a non-negative value, $(b)$ holds since for a given $\tilde{\Am}_{\cal K}^{(1)}$, both $\hat{z}_{k, i}$ and $\hat{z}_{k, i}^{\rm g}$ follow $\Cc\Nc(0,  1 + \ceil[\big]{a_{k, i}}_{\cal B} + [\Sigmam_{\hat{\wv}_k}]_{i, i})$, and $(c)$ is satisfied since the elements in $\xv$ and $\hat{\zv}_{\Kc}^{\rm g}$ are independent.
\end{document}